\newcommand{\pdim}{p}
\newcommand{\ndim}{n}
\newcommand{\paino}{\sl} 
\newcommand{\hop}{\top}              
\newcommand{\covm}{\bom \Sigma} 
\newcommand{\M}{\covm} 
\newcommand{\E}{\mathbb{E}}                  
\newcommand{\bo}[1]{\mathbf{#1}}              
\newcommand{\bom}[1]{\bm{#1}}
\newcommand{\be}{\beta}
\newcommand{\iidsim}{\overset{iid}{\sim}}
\newcommand{\al}{\alpha}  
\newcommand{\Id}{\bo I}  
\newcommand{\beq}{\begin{equation}}
\newcommand{\eeq}{\end{equation}}
\newcommand{\bmat}{\begin{pmatrix}}
\newcommand{\emat}{\end{pmatrix}}
\renewcommand{\S}{{\bf S}}
\newcommand{\I}{\bo I} 
\renewcommand{\dim}{\pdim}
\newcommand{\R}{\mathbb{R}}
\newcommand{\A}{{\bf A}}
\newcommand{\B}{{\bf B}}
\newcommand{\Mn}{\M}  
\newcommand{\commat}{\bo K_\pdim}  
\newcommand{\ve}{\mathrm{vec}} 
\newcommand{\x}{{\bf x}}
\renewcommand{\u}{{\bf u}}
\newcommand{\z}{{\bf x}}
\newcommand{\Fr}{\mathrm{F}}
\newcommand{\Mor}{\bo S}
\newcommand{\Mcl}{\bo S_{\al,\be}}
\newcommand{\MSE}{\mathrm{MSE}}
\newcommand{\PDH}{\mathcal{S}}   
\newcommand{\ka}{\kappa}
\DeclareMathOperator{\tr}{tr}
\DeclareMathOperator{\Tr}{tr}
\DeclareMathOperator{\cov}{cov}
\newcounter{ctheorem}
\newtheorem{theorem}[ctheorem]{Theorem}
\newcounter{clemma}
\newtheorem{lemma}[clemma]{Lemma}
\begin{document}
%
\title{Optimal High-Dimensional Shrinkage Covariance Estimation for Elliptical Distributions}

\author{\IEEEauthorblockN{Esa Ollila}
\IEEEauthorblockA{Department of Signal Processing and Acoustics\\
Aalto University, Finland
}}


%


\maketitle

\begin{abstract}
We derive an optimal shrinkage sample covariance matrix (SCM) estimator 
which is suitable for high dimensional problems 
and when sampling from an unspecified elliptically symmetric  distribution. 
Specifically, we derive the optimal (oracle) shrinkage parameters that obtain the minimum mean-squared error (MMSE) between the shrinkage SCM and the true covariance matrix when sampling from an elliptical distribution.  Subsequently, we show how the oracle shrinkage parameters can be 
consistently estimated under the random matrix theory regime. 
Simulations show the advantage of the proposed estimator over the conventional shrinkage SCM estimator due to Ledoit and Wolf (2004). The proposed shrinkage SCM estimator often provides significantly better performance than the Ledoit-Wolf estimator and has the advantage that consistency is guaranteed over the whole class of elliptical distributions with finite 4th order moments. 
\end{abstract}


%
\IEEEpeerreviewmaketitle

\section{Introduction}\label{sec:intro}

We consider the problem of estimating the covariance matrix based on a sample  $\x_1,\ldots, \x_n$ of independent and identically distributed (i.i.d.)  random vectors from an unspecified $\pdim$-variate distribution $\x \sim F$ with mean vector  $\E[\x] = \bo 0$  and $\pdim \times \pdim$ positive definite covariance matrix $\M= \E[\x \x^\top]$.  
The  {\paino sample covariance matrix} (SCM) $ \S=\frac{1}{\ndim}\sum_{i=1}^\ndim \z_i \z_i^\top $ 
is the most commonly used estimator of the covariance matrix,    and when random sampling from a multivariate Gaussian 
$\mathcal N_\pdim(\bo 0,\covm)$ distribution, it is also the optimal maximum likelihood estimator (MLE).  
Estimation of high-dimensional (HD) covariance matrix when the sample size $\ndim$ is smaller, or not much larger than the dimension $\pdim$, has attracted a significant research interest in recent years. Indeed since such data  problems are becoming increasingly common in finance \cite{ledoit_wolf:2004}, genomics  or classification, 
for example.  
Insufficient number of samples causes significant estimation errors in 
 the SCM. Moreover, if  $p > n$, the SCM  $\S$ is always singular, i.e., not invertible even if the true covariance matrix $\M$ is known to be positive definite and hence non-singular. 
The commonly used approach 
is then to  use shrinkage regularization as in   \cite{ledoit_wolf:2004,ollila2014regularized,zhang2016automatic,chen2010shrinkage,couillet_mckay:2014,pascal2014generalized}, for example.  

One of the most ommonly used estimator in "large $\pdim$ compared to sample size $\ndim$ problems" is the {\paino regularized SCM (RSCM)}, 
\beq \label{eq:regSCM} 
\S_{\al, \be} = \be \S + \al    \I,   
\eeq 
where 
$\al, \be >0$ denotes the {\paino shrinkage (regularization) parameters}.   Optimal RSCM estimator is often defined as one that is based on  {\paino oracle 
shrinkage parameters} that minimize the mean squared error (MSE),   
\beq \label{eq:RegSCM_oracle} 
(\al_o,\be_o)=   \underset{\al,\be >0} {\arg \min} \Big \{ \mathrm{MSE}(\S_{\al,\be}) =  \E \Big[ \big\| \M - \S_{\al,\be}  \big\|^2_{\rm F}  \Big] \Big\} ,   
\eeq 
where $\| \cdot \|_\Fr$ denotes the Frobenius matrix norm ($\| \A \|_\Fr^2 = \tr(\A^\top \A)=\tr(\A \A^\top)$ for any matrix $\A$).  
The solution  $(\al_o,\be_o)$ are called "oracle" shrinkage parameters as they will obviously  depend on the true unknown covariance matrix $\Mn$ and hence can not be used in practise. The widely popular Ledoit-Wolf (LW-)RSCM  \cite{ledoit_wolf:2004}   is based on  consistent estimators  $(\hat \al_o^{{\tiny \mbox{LW}}}, \hat \be_o^{{\tiny \mbox{LW}}})$ of $(\al_o,\be_o)$ under the random matrix theory (RMT) regime. However, more accurate finite sample estimation performance 
can be obtained by assuming that the observations are from a specific $\pdim$-variate distribution, e.g., the multivariate normal distribution,  as has been shown in \cite{chen2010shrinkage}.  In this paper, we derive  consistent estimators of the oracle shrinkage parameters $( \al_o,\be_o)$ under the RMT regime when sampling from an {\it unspecified} elliptically symmetric distribution.  Elliptical distributions (see  \cite{muirhead:1982,fang_etal:1990,ollila2012complex}) constitute a large class of distributions that include e.g., the multivariate normal distribution, generalized Gaussian and all compound Gaussian distributions as special cases. 

The  RMT regime refers to the case that 
\begin{itemize} 
\item[(R1)] 
$\ndim, \pdim \to \infty$ and $\pdim/\ndim \to c$, where $0 < c < \infty$.  
\end{itemize} 
Furthermore, we assume that the set of eigenvalues of $\Mn$ converge to a fixed spectrum, and that 
\begin{itemize} 
\item[(R2)] As $\pdim \to \infty$,   $\eta_i  = \tr(\Mn^i)/\pdim \to  \eta_i^o$,   $ 0 < \eta_i^0< \infty$ for $i=1, \ldots,4$ 
\end{itemize} 
 Our numerical examples illustrate that  the RSCM estimator that is based on  the proposed consistent estimators $(\hat \al^{Ell}_o,\hat \be^{Ell}_o)$   
 outperform its competitors, e.g., the LW-RSCM estimator, when sampling from an elliptical population. 

The paper is organized as follows.  In Section~\ref{sec:optim1} and Section~\ref{sec:optim2} we derive the optimal shrinkage parameters $(\al_o,\be_o)$ under the general assumption of sampling from any general   $\pdim$-variate distribution and  an  elliptical distribution with finite 4th order moments, respectively.  In Section~\ref{sec:estim}, consistent estimators  of $(\al_o,\be_o)$  are proposed under assumptions (R1) and (R2) when sampling from an unspecified elliptical distribution. Simulation studies of Section~\ref{sec:simul} illustrate that the proposed shrinkage estimator always outperforms the LW estimator when the samples are drawn from an elliptical population. 

{\it Notation:} Let $\PDH_p$ be the open cone of positive definite $p \times p$ symmetric matrices, and let $\I$ be the identity matrix of proper dimension,  $\ve(\cdot)$ denotes an operator that transforms a matrix into a vector by stacking the columns of the matrix, $\tr(\cdot)$ denotes the matrix trace operator,   and  $\otimes$ denote the {\paino Kronecker product}: 
for any matrix $\A$ and $\B$, $\A \otimes \B$ is a block matrix with $(i,j)$-block being equal to $a_{ij}\B$. A {\paino commutation matrix} $\commat$ is a $\dim^2 \times \dim^2$ block matrix with $(i,j)$-block 
equal to a $\dim \times \dim$ matrix that has a $1$ at entry $(j,i)$ and $0$'s elsewhere. 
It has the following important property \cite{magnus_neudecker:1999}: $\commat \ve(\A)= \ve(\A^\top) $ for any $\pdim \times \pdim$ matrix $\A$.

\section{Optimal oracle shrinkage parameters}  \label{sec:optim1}

Define {\paino scale} measures of $\M \in \PDH_\pdim$ as 
\beq \label{eq:scale} 
\eta =  \tr( \M )/\pdim \quad \mbox{and} \quad \eta_2 =  \tr( \M^2 )/\pdim . 
\eeq 
An important measure in our future developments is the following measure of sphericity \cite{srivastava2005some},  
\beq \label{eq:gamma} 
\gamma = \frac{\eta_2}{\eta^2} = \frac{\pdim \tr(\M^2)}{\tr(\M)^2}   . 
\eeq 
Statistic $\gamma$  measures how close the covariance matrix is to a scaled identity matrix.  
 It verifies $\gamma \geq 1$ and $\gamma=1$ if and only if $\Mn= c \I$ 
 for some $c>0$.

 The parameters $\eta$ and $\gamma$ are elemental in our developments. As is shown in Theorem~\ref{th:beta0_ELL}, the optimal shrinkage parameter pair $(\al_o,\be_o)$ for elliptical distributions depends on the true covariance matrix $\M$ only through $\eta$ and $\gamma$.  Simple "plug-in" estimates of $(\al_o,\be_o)$ can then be  obtained by simply replacing $(\eta, \gamma)$ with  their estimates.  Finding accurate and consistent estimators of the shrinkage parameters is then a considerably simpler task than in the general case of Theorem~\ref{th:oracle_LW}. 

Next theorem provides the expression for the oracle shrinkage parameters in the  case of sampling from an unspecified $\pdim$-variate distribution with finite $4$th order moments.

\begin{theorem}  \label{th:oracle_LW}    
Let $\{\x_i\}_{i=1}^\ndim$ denote a random sample from any $\pdim$-variate distribution (not necessarily elliptical distribution) with finite 4th order moments.  Then the oracle parameters in \eqref{eq:RegSCM_oracle} are  
\begin{align} \label{eq:beta0ver2} 
\be_o  
&=  \frac{   \pdim (\gamma-1) \eta^2  }{   \E \big[\tr \big(\S^2\big) \big]  -  \pdim \eta^2 }    
\quad \mbox{and} \quad  
\al_o =  (1 - \be_o) \eta 
\end{align} 
where $\eta$  and $\gamma$ are defined in \eqref{eq:scale} and \eqref{eq:gamma}, respectively.  
The value of MSE at the optimum is 
\beq \label{eq:MSEopt}
\MSE(\bo S_{\al_o,\be_o})  = \| \M - \eta \I \|^2_{\Fr}  ( 1- \be_o). 
\eeq 
The optimal   $\be_o$ is  always in the range $[0,1)$. 
\end{theorem}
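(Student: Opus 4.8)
The plan is to treat $\MSE(\S_{\al,\be})$ as a quadratic form in the two scalars $(\al,\be)$ and locate its unique stationary point. First I would expand $\|\M - \be\S - \al\I\|_\Fr^2 = \tr[(\M-\be\S-\al\I)^2]$ using the symmetry of $\M$, $\S$ and $\I$, and take expectations term by term. The decisive simplification comes from the unbiasedness of the SCM: since $\E[\z_i\z_i^\top]=\M$ we have $\E[\S]=\M$, whence $\E[\tr(\M\S)]=\tr(\M^2)$ and $\E[\tr(\S)]=\tr(\M)$. This collapses the objective to
\[
\MSE = \tr(\M^2) + \be^2\E[\tr(\S^2)] + \al^2\pdim - 2\be\tr(\M^2) - 2\al\tr(\M) + 2\al\be\tr(\M),
\]
a convex quadratic whose only random ingredient is $\E[\tr(\S^2)]$.

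Next I would set the two partial derivatives to zero. The $\al$-equation reads $\al\,\pdim = (1-\be)\tr(\M)$, which gives immediately $\al_o = (1-\be_o)\eta$ after dividing by $\pdim$. Substituting this into the $\be$-equation $\be\,\E[\tr(\S^2)] = \tr(\M^2) - \al\tr(\M)$, and rewriting $\tr(\M^2)=\pdim\gamma\eta^2$ together with $\tr(\M)^2/\pdim = \pdim\eta^2$ via the definitions of $\eta$ and $\gamma$, I would solve the resulting linear equation in $\be$ to obtain the stated $\be_o$. To confirm this stationary point is the global minimum I would check that the $2\times 2$ Hessian is positive definite, which reduces to $\E[\tr(\S^2)] > \pdim\eta^2$, exactly the positivity of the denominator established below.

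For the value at the optimum, rather than expanding afresh I would substitute the first-order conditions back into the quadratic: using $\al_o^2\,\pdim=\al_o(1-\be_o)\tr(\M)$ and $\be_o^2\,\E[\tr(\S^2)]=\be_o\,\tr(\M^2)-\al_o\be_o\tr(\M)$, the objective telescopes to $(1-\be_o)\tr(\M^2) - \al_o\tr(\M)$. Inserting $\al_o=(1-\be_o)\eta$ factors out $(1-\be_o)$ and leaves $\tr(\M^2)-\tr(\M)^2/\pdim$, which equals $\|\M-\eta\I\|_\Fr^2$ by a direct expansion (and also equals $\pdim(\gamma-1)\eta^2$). This yields \eqref{eq:MSEopt}.

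The only step requiring a genuine idea is the range $\be_o\in[0,1)$, where the key is a bias--variance decomposition of the denominator. Writing $\E[\tr(\S^2)]=\E[\|\S\|_\Fr^2]=\|\E[\S]\|_\Fr^2+\E[\|\S-\E[\S]\|_\Fr^2]=\tr(\M^2)+\E[\|\S-\M\|_\Fr^2]$, I would obtain
\[
\E[\tr(\S^2)] - \pdim\eta^2 = \big(\tr(\M^2)-\tr(\M)^2/\pdim\big) + \E[\|\S-\M\|_\Fr^2] = \pdim(\gamma-1)\eta^2 + \E[\|\S-\M\|_\Fr^2].
\]
The numerator of $\be_o$ is $\pdim(\gamma-1)\eta^2\ge 0$ since $\gamma\ge 1$, so $\be_o\ge 0$; and because the variance term $\E[\|\S-\M\|_\Fr^2]$ is strictly positive for any non-degenerate sample, the denominator strictly exceeds the numerator, forcing $\be_o<1$. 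I expect the main obstacle to be purely the bookkeeping in Steps~1--3; the conceptual content lies entirely in the unbiasedness identity and this final bias--variance argument.
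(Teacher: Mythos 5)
Your proposal is correct, and I verified the algebra: the expansion of the quadratic, the two first-order conditions, the Hessian condition $\E[\tr(\S^2)]>\pdim\eta^2$, the telescoping to $(1-\be_o)\tr(\M^2)-\al_o\tr(\M)$, and the bias--variance decomposition all check out. The route is, however, organized differently from the paper's. The paper does not derive the optimum from scratch: it quotes Ledoit--Wolf's Theorem~2.1 for the ratio form
\[
\be_o=\frac{\|\M-\eta\I\|_\Fr^2}{\|\M-\eta\I\|_\Fr^2+\E[\|\S-\M\|_\Fr^2]},\qquad \al_o=(1-\be_o)\eta,
\]
from which $\be_o\in[0,1)$ is immediate, and then converts numerator and denominator into the stated trace form via exactly the two identities you use ($\E[\|\S-\M\|_\Fr^2]=\E[\tr(\S^2)]-\tr(\M^2)$ and $\|\M-\eta\I\|_\Fr^2=\pdim(\gamma-1)\eta^2$); it only expands the quadratic $L(\al,\be)$ to evaluate the MSE at the optimum. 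Your proof is self-contained: you rederive the LW ratio form as a by-product of your final bias--variance step rather than assuming it, and you supply the convexity/Hessian check that the paper delegates to the citation. What the paper's route buys is brevity; what yours buys is independence from the external reference and an explicit verification that the stationary point is the global minimizer. One small point worth making explicit in a final write-up: the strict inequality $\be_o<1$ (and the positive definiteness of the Hessian) requires $\E[\|\S-\M\|_\Fr^2]>0$, which holds for any finite $\ndim$ and non-degenerate sampling distribution; you note this, and it is the same implicit assumption underlying the paper's cited ratio form.
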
 

\begin{proof}  
It was shown in \cite[Theorem~2.1]{ledoit_wolf:2004}  that 
\begin{align} \label{eq:beta0ver2_apu} 
\be_o  &=  \frac{ \| \M -  \eta  \I \|^2_{\rm F}}{  \| \M -  \eta  \I \|^2_{\rm F}  +  \E\big[ \| \S -  \M \|^2_{\rm F}\big] } 
\end{align} 
and $\al_o =  (1 - \be_o) \eta$. The form of $\be_o$ in \eqref{eq:beta0ver2_apu} implies that $\be_o \in [0,1)$.  
We now show that \eqref{eq:beta0ver2_apu} can be expressed in the form \eqref{eq:beta0ver2}. 
First, we observe that 
\begin{align} 
 a_1 &= \E \big[ \| \Mor - \Mn \|^2_{\Fr} \big] =   \E \big[\tr \! \big(\Mor^2\big) \big] - 2 \E \big[ \tr\big(\Mor \Mn\big)\big] + \tr\big(\Mn^2\big) \notag 
 \\ 
 &=  \E \big[\tr \big(\Mor^2\big) \big]  -  \tr(\Mn^2) \label{eq:varrho}
\end{align}
where we used that $\E[\tr(\Mor \Mn)] = \Tr(\E[\Mor]\Mn) =  \tr(\Mn^2)$. 
The numerator of $\be_o$ in \eqref{eq:beta0ver2_apu} is 
\begin{align} 
a_2 &=\| \Mn -  \eta  \I \|^2_{\rm F} = \tr(\Mn^2) - (1/\pdim) \big\{\tr(\Mn) \big\}^2   \notag \\ 
&= \pdim( \eta_2 - \eta^2)  =  \pdim(\gamma-1) \eta^2  \label{eq:th_oracle_LW_apu}
\end{align} 
which shows that denominator of $\be_o$ is $a_1  + a_2 = \E \big[\tr\! \big(\Mor^2\big) \big]  - (1/\pdim) \big\{\tr(\Mn) \big\}^2  =  \E \big[\tr\! \big(\Mor^2\big) \big]  - \pdim \eta^2$.   These expressions for numerator and denominator of $\beta_o$ yield the assertion \eqref{eq:beta0ver2}  for $\be_o$. 
Write $L(\al,\be)=\E \big[ \| \Mcl  - \Mn \|_{\rm F}^2 \big]$ for the MSE. Note that 
\begin{align} 
L(\al,\be) 
&=  \E \Big[ \big \| \al \I  + \be( \Mor - \Mn) - (1-\be) \Mn  \big\|_{\rm F}^2 \Big]  \notag
 \\ &= \al^2 \pdim + \be^2  a_1  + (1-\be)^2 \eta_2 \pdim   - 2\al(1-\be) \eta \pdim.  \label{eq:oracle_LW_apu} 
\end{align} 
The MSE at the optimum is 
\begin{align*} 
\MSE(\bo S_{\al_o,\be_o})   &= L((1-\be_o)\eta,\be_o)\\ &= \be_o^2 a_1 +  (1-\be_o)^2 \eta_2 \pdim  - (1-\be_o)^2 \eta^2 \pdim  \\ 
&= (1-\be_o^2)^2 \underbrace{ \{ \pdim(\eta_2 - \eta^2) \}} _{ \mbox{$=a_2$ by \eqref{eq:th_oracle_LW_apu}}}  + \be_o^2 a_1 \\ &= (1-\be_o)^2a_2 + (1-\be_o) \be_o a_2  \\
&= (1-\be_o) a_2,
\end{align*} 
where the 3rd identity follows as $\be_o a_1 = (1-\be_o) a_2$. This completes the proof.  
\end{proof}

Theorem has important implications.  First, since $\al_o = (1-\be_o)\eta$  is determined by the value of $\be_o \in [0,1)$, the optimal RSCM can be expressed simply 
as
\[ 
\bo S_{\al_o,\be_o}  =   \be_o \frac{1}{n} \sum_{i=1}^\ndim  \z_i \z_i^\top      +  (1-\be_o)\eta   \I.   
\] 
Since $\hat \eta=\tr(\S)/\pdim$ is a consistent estimator of $\eta=\tr(\M)/\pdim$ both in the conventional (fixed $\pdim$) and RMT asymptotic regime, we need  to simply  focus on finding a consistent estimator $\hat \be_o$ of $\be_o$. Consistent estimator of $\al_o$ is determined simply as 
$\hat \al_o = (1-\hat \be_o) \tr(\S)/\pdim$.   

Ledoit and Wolf \cite{ledoit_wolf:2004} showed  that 
the following estimate 
\[
\hat \be_{{\tiny \mbox{LW}}}^* =  1 -  \frac{ \sum_{i=1}^\ndim \| \x_i \x_i^\top - \S \|_\Fr^2 }{\pdim \ndim^2 (\hat \gamma-1)}    
=  1 -  \dfrac{   \frac{1}{\ndim\pdim} \sum_{i=1}^\ndim \| \x_i \|_2^4  -  \hat \eta_2}{\ndim (\hat \gamma-1)}    
\]
where $ \hat \gamma=  \hat \eta_2/\hat \eta=p\tr(\S^2)/\tr(\S)$ and $\hat \eta_2 =  \tr( \S^2 )/\pdim$, converges to $\be_o$ in \eqref{eq:beta0ver2} in   probability   under RMT regime (R1) and  (R2) 
when sampling from a distribution $\x \sim F$ with finite $4$th-order moments. 
The authors of \cite{ledoit_wolf:2004}  then proposed to estimate the shrinkage parameters using  
\begin{align*} 
\hat \be_0^{{\tiny \mbox{LW}}} &= \max(0, \hat \be_{{\tiny \mbox{LW}}}^*)  \quad \mbox{and} \quad \hat \al_0^{{\tiny \mbox{LW}}}   = (1-  \hat \be_0^{{\tiny \mbox{LW}}}) \tr(\S)/\pdim , 
\end{align*} 
where the $\max$ constraint ensures that the final estimate  remains on the interval $[0,1]$. The RSCM based on the above penalty parameters 
is referred hereafter as {\paino LW-RSCM estimator}.   

\section{Optimal oracle shrinkage parameters: the elliptical case} \label{sec:optim2}

Assume now that  $\x_1,\ldots, \x_n$ are independent and identically distributed (i.i.d.)  random vectors from 
a centered elliptical distribution with  mean vector $\E[\x]=\bo 0$ and positive definite covariance matrix $\M = \E[ \x \x^\top] $, denoted $ \mathcal E_\pdim(\bo 0,\M,g)$.  For a review of elliptical distributions, see  \cite{muirhead:1982,fang_etal:1990,ollila2012complex}.  
The probability density function (p.d.f.) of   $\x \sim \mathcal E_\pdim(\bo 0,\M,g)$ is 
\[
f(\z) = C_{\pdim,g} 
  |\M|^{-1/2} g\big( \z^\top \M^{-1} \z\big)
 \] 
where  $g: [0,\infty) \to [0,\infty)$ is  a fixed function, called the  {\paino density generator}, that is independent of $\x$ and $\M$, and $C_{\pdim,g}$ is a normalizing constant ensuring that $f(\z)$ integrates to 1. Let $g$ be defined so that $\M$ represents the covariance matrix  of $\x$.  For example, the $\pdim$-variate Gaussian distribution, denoted $\x \sim \mathcal N_\pdim(\bo 0,\M)$, is a member in this class with density generator $g(t)=\exp(-t/2)$.  
As earlier in Theorem~\ref{th:oracle_LW},  we   
assume that the elliptical population possesses finite 4th-order moments. 

Recall that the {\paino kurtosis} of a zero mean random variable $x$  is defined as 
\[
 \mathrm{kurt}(x) = \frac{\E[ x^4]}{ (\E[x^2])^2} - 3 .
\]
The {\paino elliptical kurtosis parameter} \cite{muirhead:1982}  $\ka$ of a random vector $\x=(x_1,\ldots,x_\pdim)^\top \sim \mathcal E_\pdim(\bo 0,\M,g)$ is defined as 
 \beq  \label{eq:kappa} 
 \kappa =  \frac{ \E[r^4] }{\pdim(\pdim+2)}  -1  =  \frac 1  3 \cdot  \mathrm{kurt}(x_i),  
 \eeq 
 where  $r$ denotes the (2nd order) modular variate of the elliptical distribution,  defined as $r= \sqrt{ \x^\top \M^{-1} \x}$.  
 The elliptical kurtosis 
 shares properties similar to kurtosis of a real random variable.  Especially, if    
$\x \sim \mathcal N_\pdim(\bo 0,\M)$, then $\ka=0$.  This is  obvious since the marginal distributions are Gaussian and  hence $\kappa = (1/3) \, \mathrm{kurt}(x_i)=0$.  
Another way to derive this is by noting that $ r^2=\x^\top \M^{-1} \x  \sim \chi^2_\pdim$ 
 and hence 
 $\E[r^4] = \pdim(\pdim+2)$. 
    The importance of elliptical kurtosis parameter $\kappa$ is due to the fact that the $\pdim^2 \times \pdim^2$ covariance matrix of $\ve(\S)$ can be expressed as 
      \cite{muirhead:1982}:
 \begin{gather}
 \cov( \ve(\S))  =  \notag \\  
  \frac{ (1+\ka)}{\ndim}  (\I + \commat)  (\M \otimes \M) +  \frac{\ka}{\ndim} \,  \ve(\M) \ve(\M)^\top, \label{eq:cov_vecS} 
 \end{gather} 
 where $\commat$   denotes the commutation matrix defined in the Introduction. Thus the elliptical kurtosis parameter $\ka$ along with the true covariance matrix $\M$ provide a complete description of the covariances between elements $S_{ij}$ and $S_{kl}$   of the SCM $\S$.  

In the next Lemma we derive the MSE of the SCM. 

\begin{lemma} \label{lem:NMSE}   Let $ \{ \x_i \}_{i=1}^\ndim \iidsim \mathcal E_\pdim(\bo 0,\M,g)$, where $\M = \cov(\x_i)$
and 4th-order moments exists. 
Then the MSE of $\S$ is 
\[
\MSE(\S)=  \frac \pdim \ndim \cdot  \eta^2  \Big\{  \kappa( 2 \gamma +  \pdim) +   \gamma+\pdim \Big\},
\]
and the {\paino normalized mean squared error (NMSE)} is 
\[
\mathrm{NMSE}(\S)  =   \frac{\E \big[ \| \S  - \M \|_\Fr^2 \big] }{ \| \M \|^2_\Fr} =  \frac 1 \gamma \cdot  \frac 1 \ndim  \Big\{  \kappa( 2 \gamma +  \pdim) +  \gamma+\pdim \Big\}.
\]
Furthermore, 
\begin{align*} 
\E [ \tr(\S^2)]  &= \MSE(\S) + \pdim \eta_2. 
 \end{align*} 
Above  $\eta, \gamma$ and  $\ka$ are defined in \eqref{eq:scale}, \eqref{eq:gamma} and \eqref{eq:kappa}, respectively. 
\end{lemma}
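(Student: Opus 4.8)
The plan is to reduce the whole computation to traces of the covariance matrix of $\ve(\S)$ supplied by \eqref{eq:cov_vecS}. Since the SCM is unbiased, $\E[\S]=\M$, the Frobenius-norm identity $\|\A\|_\Fr^2=\ve(\A)^\top\ve(\A)$ gives
\[
\MSE(\S)=\E\big[\|\S-\M\|_\Fr^2\big]=\E\big[(\ve(\S)-\ve(\M))^\top(\ve(\S)-\ve(\M))\big]=\tr\big(\cov(\ve(\S))\big).
\]
This turns the MSE into the trace of the right-hand side of \eqref{eq:cov_vecS}, so it remains to trace the three summands appearing there.

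The first and third summands are immediate: $\tr(\M\otimes\M)=(\tr\M)^2=\pdim^2\eta^2$ and $\tr\big(\ve(\M)\ve(\M)^\top\big)=\ve(\M)^\top\ve(\M)=\tr(\M^2)=\pdim\eta_2$. The only nonroutine piece is $\tr\big[\commat(\M\otimes\M)\big]$, which I expect to be the main obstacle since it is the sole Kronecker-algebra identity needed. Here I would use the block representation $\commat=\sum_{i,j}(\e_i\e_j^\top)\otimes(\e_j\e_i^\top)$ implied by the definition in the Introduction, so that $\commat(\M\otimes\M)=\sum_{i,j}(\e_i\e_j^\top\M)\otimes(\e_j\e_i^\top\M)$ and hence $\tr\big[\commat(\M\otimes\M)\big]=\sum_{i,j}M_{ji}M_{ij}=\tr(\M^2)=\pdim\eta_2$.

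Assembling the three contributions yields
\[
\MSE(\S)=\frac{1+\ka}{\ndim}\big(\pdim^2\eta^2+\pdim\eta_2\big)+\frac{\ka}{\ndim}\pdim\eta_2,
\]
and substituting $\eta_2=\gamma\eta^2$ (from \eqref{eq:gamma}) and collecting the terms multiplying $\ka$ produces the claimed expression $\MSE(\S)=\frac{\pdim}{\ndim}\eta^2\{\ka(2\gamma+\pdim)+\gamma+\pdim\}$. Dividing through by $\|\M\|_\Fr^2=\tr(\M^2)=\pdim\eta_2=\pdim\gamma\eta^2$ then gives the NMSE formula directly.

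For the final identity I would once more invoke unbiasedness of $\S$ together with $\tr(\S^2)=\|\S\|_\Fr^2$ (valid as $\S$ is symmetric). Writing $\S=(\S-\M)+\M$ and expanding, the cross term $2\,\E[\tr((\S-\M)\M)]=2\tr((\E[\S]-\M)\M)$ vanishes, leaving $\E[\tr(\S^2)]=\MSE(\S)+\|\M\|_\Fr^2=\MSE(\S)+\pdim\eta_2$, as asserted.
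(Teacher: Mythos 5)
Your proposal is correct and follows essentially the same route as the paper: it reduces $\MSE(\S)$ to $\tr\{\cov(\ve(\S))\}$ via unbiasedness, traces the three summands of \eqref{eq:cov_vecS} using $\tr(\A\otimes\B)=\tr(\A)\tr(\B)$, $\tr\{\ve(\M)\ve(\M)^\top\}=\tr(\M^2)$ and $\tr\{\commat(\M\otimes\M)\}=\tr(\M^2)$, and obtains the last identity by expanding $\tr(\S^2)$ around $\M$ and killing the cross term with $\E[\S]=\M$. The only cosmetic difference is that you spell out the commutation-matrix trace via the block sum $\commat=\sum_{i,j}(\e_i\e_j^\top)\otimes(\e_j\e_i^\top)$, a step the paper merely asserts as easy; your computation of it is correct.
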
 

\begin{proof} 
Since $\S$ is unbiased, so $\E[\S] = \M$, it holds that 
\beq \label{eq:lem:NMSE:apu1} 
\mathrm{MSE}(\S) =  \tr \{ \cov( \ve(\S)) \} ,
\eeq 
where $\cov(\ve(\S))$ has the expression stated in \eqref{eq:cov_vecS}.  
Then recall the following results: 
$\tr(\A \otimes \B) = \tr(\A) \tr(\B) $,  $\tr \{  \ve(\A) \ve(\B)^\top \} = \tr(\A \B)$ for any square matrices $\A$ and $\B$ of same order; see e.g., \cite{magnus_neudecker:1999}. 
These imply that 
\beq \label{eq:lem:NMSE:apu2} 
\tr( \M \otimes \M)= \tr(\M)^2, \, \,  \tr \{\ve(\M) \ve(\M)^\top \} = \tr(\M^2).
\eeq 
It is also easy to  show that  
\beq \label{eq:lem:NMSE:apu3} 
\tr\big\{ \commat (\M \otimes \M) \big\} = \tr(\M^2)
\eeq
by recalling the definition of the commutation matrix and  the property $\tr(\A \otimes \B) = \tr(\A) \tr(\B)$. 
 Using  \eqref{eq:lem:NMSE:apu1} -   \eqref{eq:lem:NMSE:apu3}, then yield the stated expression for $\MSE(\S)$. The expression for NMSE is obtained by dividing $\MSE(\S)$  by $\tr(\M^2) = \pdim \eta_2$. The last results follows as
\begin{align*} 
\MSE(\S) &=\E[ \| \S - \M \|_\Fr^2 ] = \E[ \tr\{ (\S - \M) (\S - \M)\}] \\ &= \E[ \tr(\S^2) - 2 \tr(\S \M) + \tr(\M^2)]  \\ &= \E[\tr(\S^2)] - \pdim \eta_2
\end{align*} 
by using that $\E[\tr(\S \M)] = \tr(\E[\S] \M) = \tr(\M^2) = \pdim \eta_2$.
\end{proof}  

Next theorem states that the oracle parameters derived in Theorem~\ref{th:oracle_LW} can be written in a much simpler form when sampling from an 
elliptically symmetric distribution.  
 
 \begin{theorem}  \label{th:beta0_ELL} Let $\{\x_i\}_{i=1}^\ndim \iidsim  \mathcal E_\pdim(\bo 0, \M, g)$ 
and assume that elliptical population possesses finite    4th-order moments.   
Then the oracle  parameters $(\al_o,\be_o)$  that minimize the MSE 
are  
\begin{align*} 
\be_o^{Ell} &= \dfrac{ \gamma-1}{   (\gamma-1) \, + \, \gamma \cdot  \mathrm{NMSE}(\S) }  \\
&= \dfrac{ \gamma-1 }{   \gamma-1 +  (1/\ndim)\{  \kappa( 2 \gamma +  \pdim) +  \gamma+\pdim \} }  
\end{align*}
and $\al_o^{Ell} = (1-\be_o^{Ell}) \eta$.   
\end{theorem}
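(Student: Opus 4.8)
The plan is to derive the elliptical formula directly from the general expression for $\be_o$ in Theorem~\ref{th:oracle_LW}, feeding in the elliptical-specific quantities supplied by Lemma~\ref{lem:NMSE}. Since the relation $\al_o = (1-\be_o)\eta$ in Theorem~\ref{th:oracle_LW} holds regardless of the distributional assumption, the whole task reduces to rewriting $\be_o$; once $\be_o^{Ell}$ is established, the claim $\al_o^{Ell} = (1-\be_o^{Ell})\eta$ requires no additional work.

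First I would take the denominator $\E[\tr(\S^2)] - \pdim\eta^2$ of $\be_o$ and apply the identity $\E[\tr(\S^2)] = \MSE(\S) + \pdim\eta_2$ from Lemma~\ref{lem:NMSE}, turning it into $\MSE(\S) + \pdim(\eta_2 - \eta^2)$. Next I would invoke $\eta_2 = \gamma\eta^2$, which is immediate from the definition $\gamma = \eta_2/\eta^2$ in \eqref{eq:gamma}, to write $\pdim(\eta_2 - \eta^2) = \pdim\eta^2(\gamma-1)$, so that the denominator becomes $\MSE(\S) + \pdim\eta^2(\gamma-1)$. Dividing both this denominator and the numerator $\pdim(\gamma-1)\eta^2$ by $\pdim\eta^2$ then yields $\be_o^{Ell} = (\gamma-1)/[\,(\gamma-1) + \MSE(\S)/(\pdim\eta^2)\,]$.

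The final step is to recognize the factor $\MSE(\S)/(\pdim\eta^2)$ as $\gamma\cdot\mathrm{NMSE}(\S)$. This follows from $\mathrm{NMSE}(\S) = \MSE(\S)/\|\M\|_\Fr^2 = \MSE(\S)/\tr(\M^2) = \MSE(\S)/(\pdim\eta_2) = \MSE(\S)/(\pdim\gamma\eta^2)$, so multiplying by $\gamma$ cancels the $\gamma$ in the denominator and reproduces $\MSE(\S)/(\pdim\eta^2)$; this gives the first displayed form of $\be_o^{Ell}$. The second, fully explicit form then follows by inserting the closed expression $\mathrm{NMSE}(\S) = (1/\gamma)(1/\ndim)\{\kappa(2\gamma+\pdim)+\gamma+\pdim\}$ from Lemma~\ref{lem:NMSE}, whereupon the leading $\gamma$ cancels and leaves $(1/\ndim)\{\kappa(2\gamma+\pdim)+\gamma+\pdim\}$ in the denominator, matching the stated result.

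I do not expect a genuine obstacle: the argument is a chain of substitutions rather than an inequality or limit. The one place to be careful is the $\gamma = \eta_2/\eta^2$ substitution, since it is used twice—once to simplify $\pdim(\eta_2-\eta^2)$ in the denominator and once to identify $\|\M\|_\Fr^2 = \pdim\eta_2 = \pdim\gamma\eta^2$ in the NMSE factor—so an error there would propagate through both the numerator–denominator cancellation and the final recognition of $\gamma\cdot\mathrm{NMSE}(\S)$.
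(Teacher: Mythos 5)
Your proposal is correct and follows essentially the same route as the paper: both substitute $\E[\tr(\S^2)] = \MSE(\S) + \pdim\eta_2$ from Lemma~\ref{lem:NMSE} into the denominator of the general $\be_o$ from Theorem~\ref{th:oracle_LW}, factor out $\pdim\eta^2$ using $\gamma=\eta_2/\eta^2$, identify $\MSE(\S)/(\pdim\eta^2)=\gamma\cdot\mathrm{NMSE}(\S)$, and then insert the closed-form NMSE. The carry-over of $\al_o=(1-\be_o)\eta$ is likewise handled identically.
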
 

\begin{proof} Using Lemma~\ref{lem:NMSE}, the denominator of $\be_o$ is 
\begin{align*} 
\E\big[&\tr\! \big(\S^2\big) \big]  -  \pdim \eta^2  \\ 
&= \MSE(\S) +\pdim  \eta_2 - \pdim \eta^2 \\ &= \pdim \eta^2 \{ \MSE(\S)/(\pdim \eta^2) + \gamma-1    \}  \\ 
&=  \pdim \eta^2 \{  \gamma \cdot \mathrm{NMSE}(\S)  +  \gamma-1 \}, 
\end{align*} 
where the last idenitity follows as $\mathrm{NMSE}(\S)= \MSE(\S)/\| \M\|_F^2 = \MSE(\S)/(\pdim\eta_2)$  and recalling that $\gamma=\eta_2/\eta^2$. 
Substituting this expression into \eqref{eq:beta0ver2} yields the first assertion for  $\be_o$. The second assertion follows by 
recalling the expression for $\mathrm{NMSE}(\S)$   from Lemma~\ref{lem:NMSE}.  
\end{proof} 

It is not surprising that $\beta_o$ and hence also $\al_o$ depend on the functional form of the elliptical distribution (i.e., on density generator $g$) only via elliptical kurtosis parameter $\ka$.  Specifying the elliptical distribution (e.g., Gaussian, $t$-distribution, etc), also specifies the value of  $\kappa$.   
For example, when sampling from the Gaussian distribution, the elliptical kurtosis parameter is $\kappa=0$, 
but since we do not assume any particular elliptical distribution, we need to 
find a  consistent estimator of the elliptical  kurtosis parameter  $\hat \kappa$ as well.  

\section{Consistent estimation of the oracle parameters} \label{sec:estim} 

Let $\{\x_i\}_{i=1}^\ndim \iidsim \mathcal E_\pdim(\bo 0, \M , g)$, where $\cov(\x)=\M$ and assume that the $4$th-order moments exists. 
In this section, we address the important topic of how to obtain consistent estimators of the unknown parameters $\eta, \gamma$ and $\kappa$. 

First we recall that the sample sign covariance matrix,   
defined 
as 
\[
\S_{sgn}  = \frac{1}{\ndim} \sum_{i=1}^\ndim \frac{\x_i \x_i^\top}{\| \x_i \|^2} , 
\]
is well-known to be highly robust although it is not a consistent estimator of the covariance matrix \cite{croux2002sign}.  However,  the following result from \cite[Lemma~4.1]{zhang2016automatic} shows that it can be used to estimate the parameter $\gamma$. 

\begin{lemma}  \label{lem:eta2_ell1}  Let $\{\x_i\}_{i=1}^\ndim \iidsim \mathcal  E_\pdim(\bo 0,\M,g)$. Then  
\beq \label{eq:gammahat}
\hat \gamma = \pdim \tr \big(\S_{sgn}^2 \big) - (\pdim/\ndim) 
\eeq 
is a consistent estimator of $\gamma = \pdim \tr(\M^2)/\tr(\M)^2$ under assumption (R1) and (R2). 
\end{lemma}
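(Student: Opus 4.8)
The plan is to rewrite $\hat\gamma$ as a scaled U-statistic in the \emph{spatial signs} $\s_i = \x_i/\|\x_i\|$ and then treat its bias and its variance separately. First I would expand $\tr(\S_{sgn}^2) = \ndim^{-2}\sum_{i,j}(\s_i^\top\s_j)^2$, isolate the $\ndim$ diagonal terms (each equal to $1$ since $\|\s_i\|=1$), and note that their contribution $\pdim/\ndim$ is precisely what the correction term in \eqref{eq:gammahat} removes, giving the clean form
\[
\hat\gamma = \frac{\pdim}{\ndim^2}\sum_{i\neq j}(\s_i^\top\s_j)^2 .
\]
A structural fact I would record at the outset is that under the stochastic representation $\x_i \overset{d}{=} r_i\,\M^{1/2}\u_i$, with $\u_i$ uniform on the unit sphere and independent of the radial variate $r_i$, the scalar $r_i$ cancels in $\s_i = \M^{1/2}\u_i/\|\M^{1/2}\u_i\|$. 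Hence $\hat\gamma$ is \emph{distribution-free within the elliptical family}: its law does not depend on the density generator $g$ at all, only on $\M$, which is exactly why consistency can be expected to hold uniformly over the whole class.

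Next I would compute the mean. Writing $\bo M_{sgn}=\E[\s_i\s_i^\top]$ for the population spatial-sign covariance matrix and using independence of $\s_i,\s_j$ for $i\neq j$, one obtains $\E[(\s_i^\top\s_j)^2]=\tr(\bo M_{sgn}^2)$, so that
\[
\E[\hat\gamma] = \pdim\,\frac{\ndim-1}{\ndim}\,\tr\big(\bo M_{sgn}^2\big).
\]
The heart of the argument is then the \emph{population-level approximation} $\pdim\,\tr(\bo M_{sgn}^2)\to\gamma$ under (R1), (R2). To establish it I would exploit the high-dimensional concentration of the quadratic form $\u^\top\M\u$ about its mean $\E[\u^\top\M\u]=\tr(\M)/\pdim=\eta$. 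Under (R2) the relative fluctuation of $\u^\top\M\u$ is negligible, so that $\bo M_{sgn}=\E[\M^{1/2}\u\u^\top\M^{1/2}/(\u^\top\M\u)]$ is well approximated by $\eta^{-1}\M^{1/2}\E[\u\u^\top]\M^{1/2}=\M/\tr(\M)$, using $\E[\u\u^\top]=\I/\pdim$; plugging this in gives $\pdim\,\tr(\bo M_{sgn}^2)\approx \pdim\,\tr(\M^2)/\tr(\M)^2=\gamma$. Making this rigorous — quantifying the error incurred by replacing the random scalar $\u^\top\M\u$ with $\eta$ and showing it vanishes — is the \textbf{main obstacle}, since it requires controlling several moments of $\u^\top\M\u$ uniformly in $\pdim$ in terms of the normalized traces $\eta_i=\tr(\M^i)/\pdim$, which is precisely what assumption (R2) is designed to supply.

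Finally I would show that the random fluctuation of $\hat\gamma$ about its mean vanishes. Since $\hat\gamma = \pdim\,(\ndim-1)\ndim^{-1}\,U_\ndim$ with $U_\ndim$ the U-statistic of symmetric kernel $h(\s_i,\s_j)=(\s_i^\top\s_j)^2$, I would bound $\mathrm{Var}(\hat\gamma)=\pdim^2(\ndim-1)^2\ndim^{-2}\,\mathrm{Var}(U_\ndim)$ via the Hoeffding decomposition $\mathrm{Var}(U_\ndim)\approx 4\ndim^{-1}\zeta_1 + 2\ndim^{-2}\zeta_2$, where $\zeta_1=\mathrm{Var}(\s_1^\top\bo M_{sgn}\s_1)$ and $\zeta_2=\mathrm{Var}((\s_1^\top\s_2)^2)$. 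Because the eigenvalues of $\bo M_{sgn}$ are of order $1/\pdim$ (they sum to $\tr(\bo M_{sgn})=1$) and $(\s_1^\top\s_2)^2$ concentrates at order $1/\pdim$, one checks $\zeta_1=O(\pdim^{-3})$ and $\zeta_2=O(\pdim^{-2})$, so that the two contributions to $\pdim^2\,\mathrm{Var}(U_\ndim)$ are $O(1/(\ndim\pdim))$ and $O(1/\ndim^2)$ and hence vanish under (R1). Combining the vanishing variance with $\E[\hat\gamma]\to\gamma$ yields $\hat\gamma\to\gamma$ in probability, which completes the argument; the same delicate moment estimates driven by (R2) would, together with their almost-sure versions, upgrade the conclusion to strong consistency.
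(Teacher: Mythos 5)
The paper does not actually prove this lemma; it imports it verbatim from \cite[Lemma~4.1]{zhang2016automatic}, so there is no in-paper argument to compare yours against. Judged on its own terms, your skeleton is the standard (and correct) route to this result: the expansion $\tr(\S_{sgn}^2)=\ndim^{-2}\sum_{i,j}(\s_i^\top\s_j)^2$ with the $\ndim$ diagonal terms contributing exactly the $\pdim/\ndim$ that the estimator subtracts, the observation that the radial variate cancels so the statistic is distribution-free over the elliptical family, the bias computation $\E[(\s_i^\top\s_j)^2]=\tr(\bo M_{sgn}^2)$ for $i\neq j$, and a Hoeffding-type variance bound for the resulting U-statistic. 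These are all the right moves.

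Two points need more care before this is a proof rather than a plan. First, the step you yourself flag as the main obstacle --- showing $\pdim\,\tr(\bo M_{sgn}^2)\to\gamma$ --- is genuinely the entire mathematical content of the lemma, and it is left at the level of ``$\u^\top\M\u$ concentrates about $\eta$, hence $\bo M_{sgn}\approx\M/\tr(\M)$.'' To close it you must expand $(\u^\top\M\u)^{-1}=\eta^{-1}(1+\varepsilon)^{-1}$ with $\varepsilon=(\u^\top\M\u-\eta)/\eta$, use $\mathrm{Var}(\u^\top\M\u)=\tfrac{2(\eta_2-\eta^2)}{\pdim+2}=O(1/\pdim)$ (which follows from the same fourth-moment identities as in Lemma~\ref{LemU}c) together with $\eta\to\eta_1^o>0$ from (R2), and then verify that the resulting perturbation $\bo E=\bo M_{sgn}-\M/\tr(\M)$ satisfies $\pdim\{2\tr(\M\bo E)/\tr(\M)+\tr(\bo E^2)\}\to 0$; the factor $\pdim$ in front means a merely qualitative ``$\approx$'' is not enough. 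Second, your justification of $\zeta_1=O(\pdim^{-3})$ rests on the claim that the eigenvalues of $\bo M_{sgn}$ are of order $1/\pdim$ because they sum to one; that implication is false in general (a trace-one matrix can have an $O(1)$ eigenvalue), and (R2) only yields $\lambda_{\max}(\M)=O(\pdim^{1/4})$. What you actually need, and what does follow from (R2), is the weaker bound $\tr(\bo M_{sgn}^2)=O(1/\pdim)$, which suffices for $\mathrm{Var}(\s_1^\top\bo M_{sgn}\s_1)=O(\pdim^{-3})$ via the spherical fourth-moment formulas. With those two repairs your argument goes through and gives $\mathrm{Var}(\hat\gamma)=O(1/(\ndim\pdim))+O(1/\ndim^2)\to 0$ under (R1), hence consistency.
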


Note that $\hat \gamma$ is a robust and distribution-free estimator of $\gamma$. 
The optimum parameter $\beta_o^{Ell}= \beta_o^{Ell}(\gamma, \ka)$ depends on $\gamma$ and $\ka$. Hence a {\paino plug-in estimator}, 
\[ 
\hat \beta_o^{Ell}= \beta_o^{Ell}(\hat \gamma, \hat \ka),
\] 
where $\hat \gamma$ and $\hat \ka$ are consistent estimators of $\gamma$ and $\kappa$, is a consistent estimator of  $\beta_o$ as well. 
  A natural estimate of $\ka$  is the  conventional sample average, 
\beq \label{eq:kappahat}
\hat \kappa =  \max \Big( - \frac{2}{p+2} \, ,  \,  \frac{1}{3 \pdim} \sum_{j=1}^\pdim  \hat k_j \Big), 
\eeq 
where   $\hat k_j =  m^{(4)}_j/ \big(m_j^{(2)} \big)^2 - 3$   is the sample kurtosis of the $j$th variable  and 
$m^{(q)}_j = \frac 1 \ndim \sum_{i=1}^\ndim  (x_{ij})^q$ 
denotes the $q$th order sample moment, 
 $j=1,\ldots, \pdim$.   Above  the $\max$ constraint ensures that the final estimate  $\hat \ka$ does not exceed the theoretical lower bound \cite{bentler1986greatest},  $-2/(p+2)$ of elliptical kurtosis parameter $\ka$.  
 The estimate $\hat \kappa$ is a consistent estimator of the elliptical kurtosis $\kappa$ both in the conventional and RMT regime.

 
 
We can now define the {\paino Ell-RSCM estimator} as the regularized SCM  based on the following estimated optimal shrinkage parameters 
\begin{align} 
 \hat \beta_o^{Ell} &=  \max \left(0, \dfrac{T}{  T +  (1/\ndim)\{ \hat \kappa (  2 \hat \gamma +  \pdim) +   \hat \gamma + \pdim \} } \right)   \label{eq:Ell-RSCM}  \\ 
 \hat \al_o^{Ell} &= (1-  \hat \be_0^{Ell}) \tr(\S)/\pdim  \notag
\end{align} 
where $T = \hat \gamma -1$ and $\hat \gamma$  and $\hat \kappa$ are defined in \eqref{eq:gammahat} and  \eqref{eq:kappahat}, respectively,


\section{Simulation study} \label{sec:simul} 

We conduct a small simulation study to investigate the performance of RSCM estimators in terms of their finite sample NMSE. 
Each simulation is repeated 10000 times and the NMSE is computed (averaged of Monte-Carlo runs) for each  RSCM estimator.  Theoretical oracle MSE value derived in  \eqref{eq:MSEopt} and normalized by $\| \M \|^2_{\Fr}$ is used as a benchmark lower bound for empirical NMSE values.  This is shown in the figures as solid black line.


\subsection{AR(1) covariance matrix} 


 In the first experiment, an autoregressive covariance structured  is used. We let $\M$ be the covariance
matrix of a Gaussian AR$(1)$ process, 
\[
[\M]_{ij}  = \varrho^{|i-j|},  \quad r \in (0,1) . 
\]
Note that $\M$ 
verifies $\eta = \tr(\M)/\pdim = 1$. When $\varrho$ is close to $0$, then $\M$ is 
close to an identity matrix and when $\varrho$  tends to $1$, $\M$ tends to  a singular matrix of rank 1. 
Thus the theoretical value $\be_o$ is close to $0$ for small values of $\varrho$, i.e., when the true covariance matrix is close to the target $\I$, 
and $\be_0 \approx 1$ for $\varrho$  close to $1$. Dimension is  fixed at $\pdim = 100$  and  $\ndim$ is allowed to vary from $0.2 \cdot \pdim $ to $1.2 \cdot \pdim$. 

Figure~\ref{fig:AR1_p100}  depicts the NMSE performance  
when the samples are drawn from a Gaussian distribution (upper panel) and 
 a multivariate $t_\nu$-distribution with $\nu=8$ degrees of freedom (lower panel).  
Several conclusions can be drawn from these figures. First, when $\varrho=0.1$ and thus $\M$ is close to the shrinkage target matrix $\I$,  Ell-RSCM 
estimators outperform the LW-RSCM estimator.  Especially, when the ratio $\ndim/\pdim$ is small (i.e., $\pdim$ larger than $\ndim$), we observe the largest performance differences in favor of Ell-RSCM. 
Second, when the true $\M$ starts to deviate significantly from the identity target matrix $\I$ (i.e., $\varrho=0.4$), LW-RSCM and Ell-RSCM estimator have similar performance especially for large values of $\ndim/\pdim$.  
 Third, when the samples are drawn from $t_8$-distribution, the performance of LW-RSCM estimator is seen to deterioritate in comparison to the proposed Ell-RSCM estimator.  Indeed very large differences are witnessed in NMSE between the estimators especially when $\ndim/\pdim < 0.5$.  


\begin{figure}
\subfigure[$\varrho=0.1$]{\includegraphics[width=0.24\textwidth]{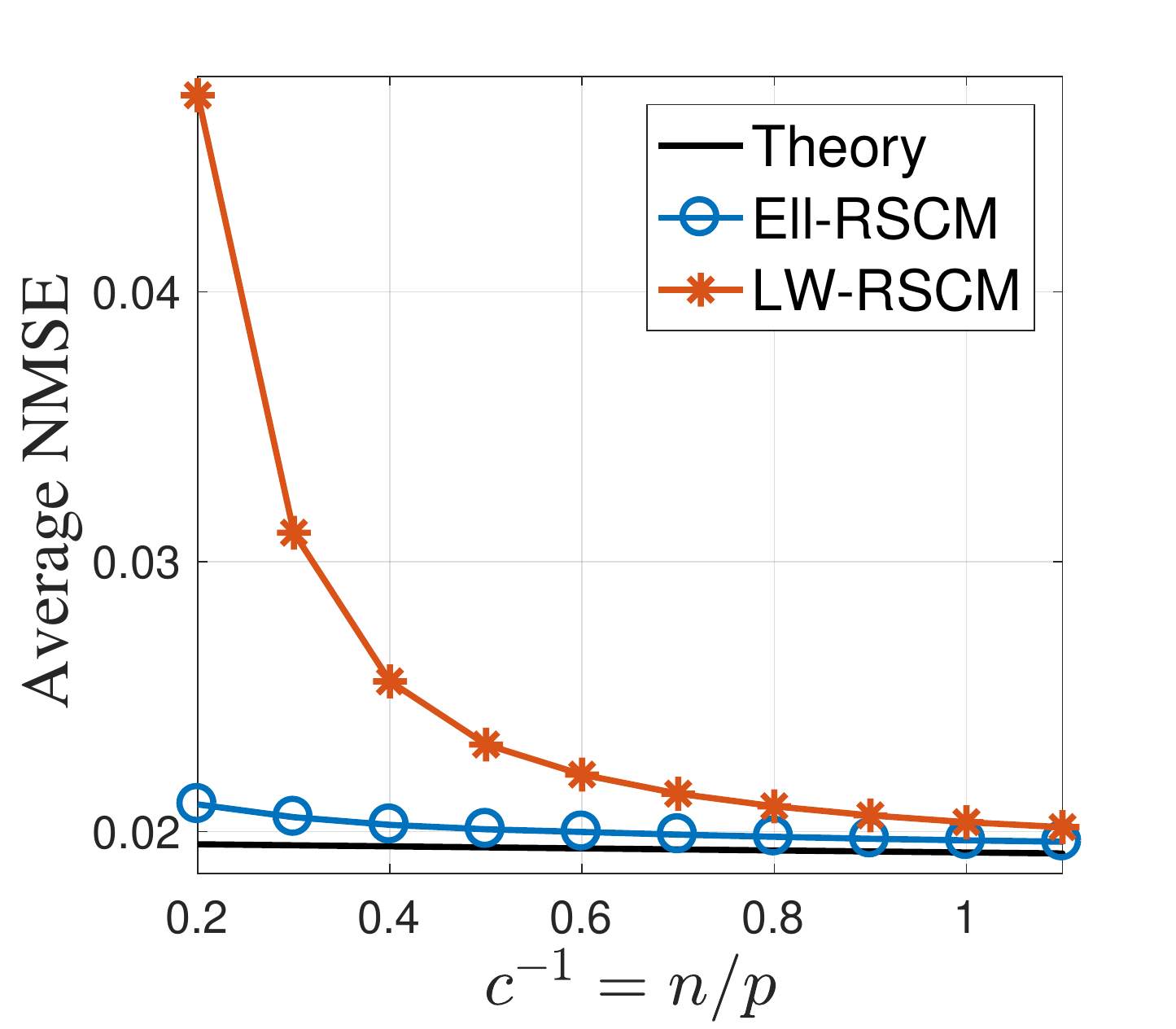}}
\subfigure[$\varrho=0.4$]{\includegraphics[width=0.24\textwidth]{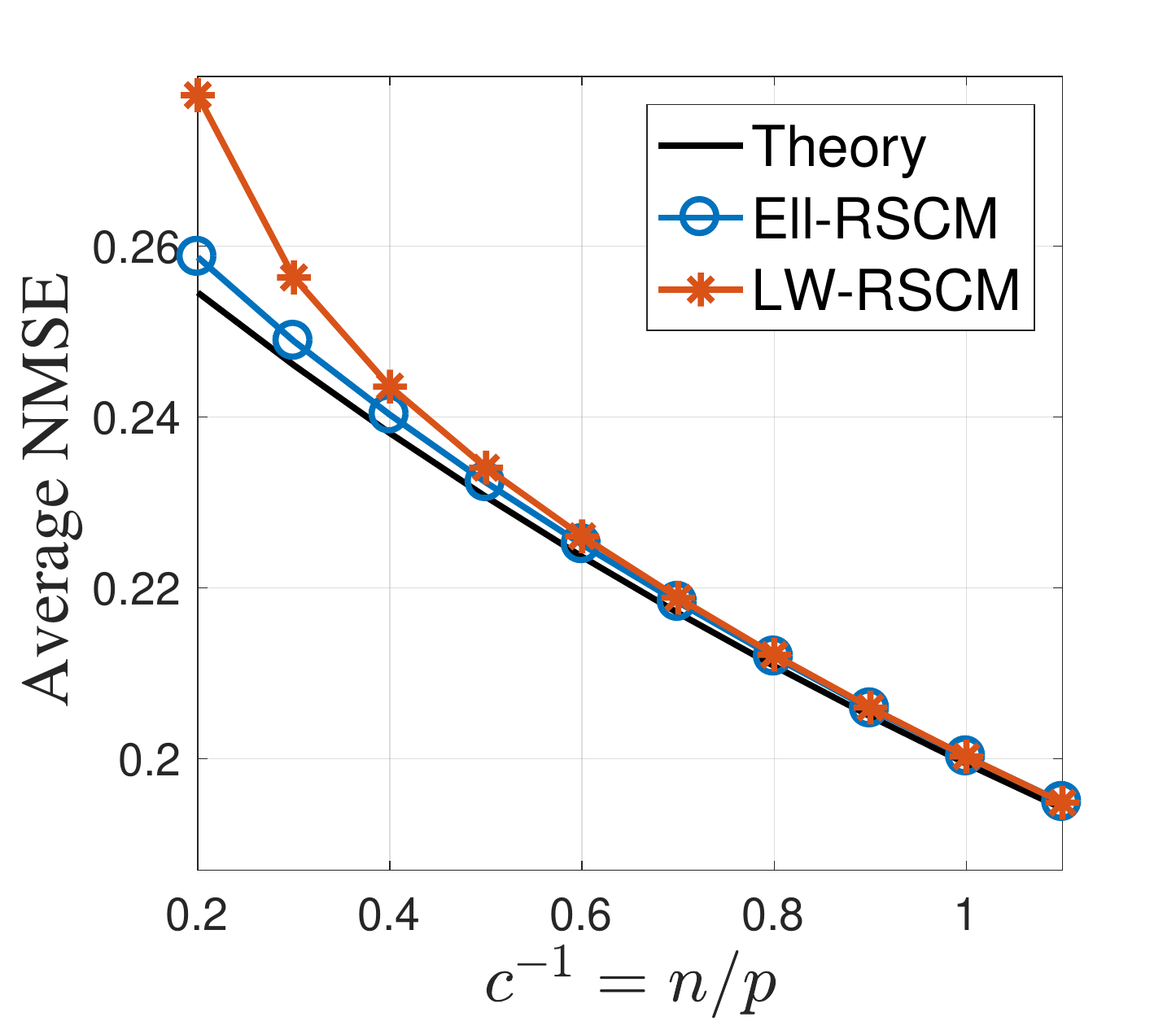}}
\subfigure[$\varrho=0.1$]{\includegraphics[width=0.24\textwidth]{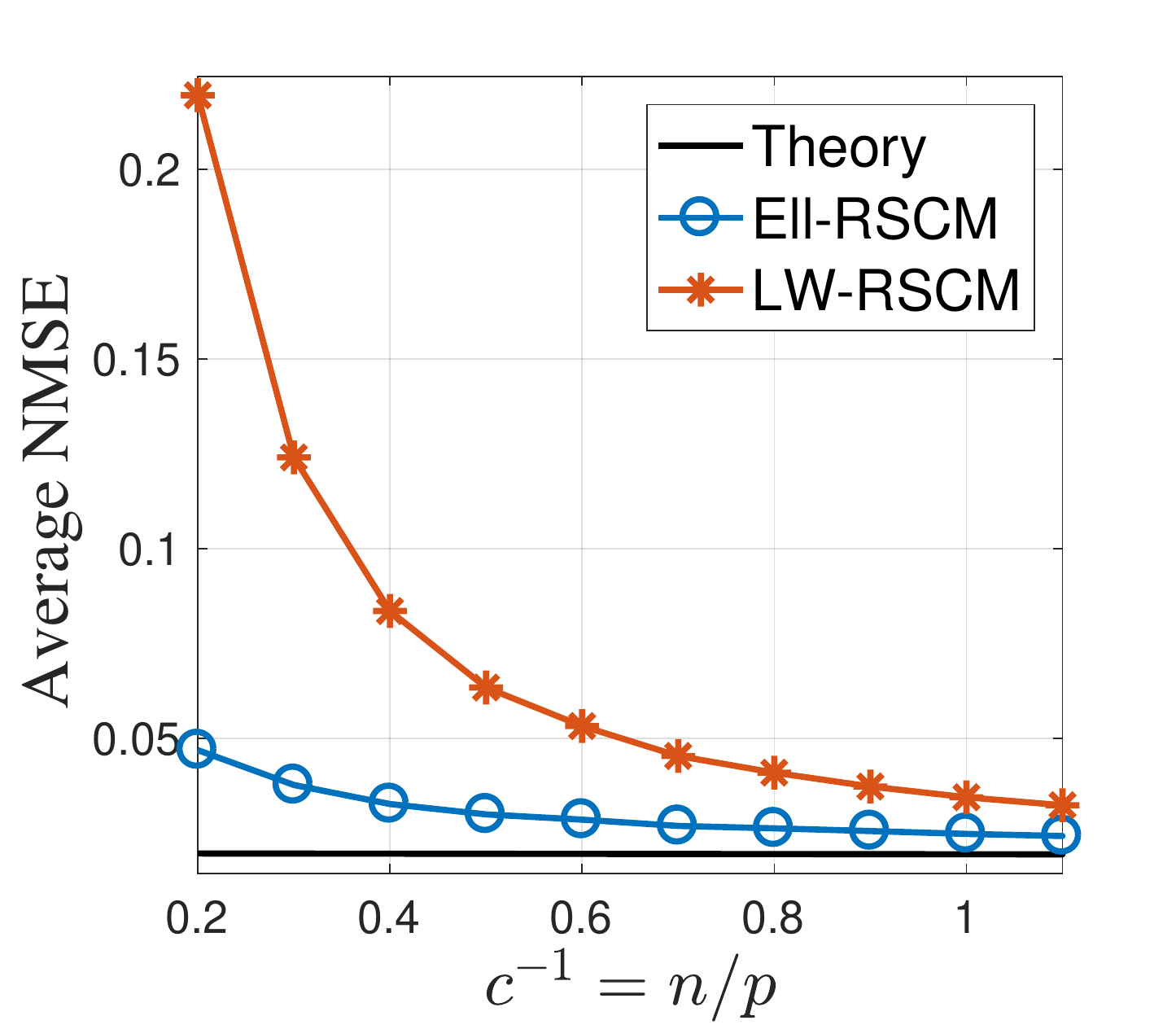}}
\subfigure[$\varrho=0.4$]{\includegraphics[width=0.24\textwidth]{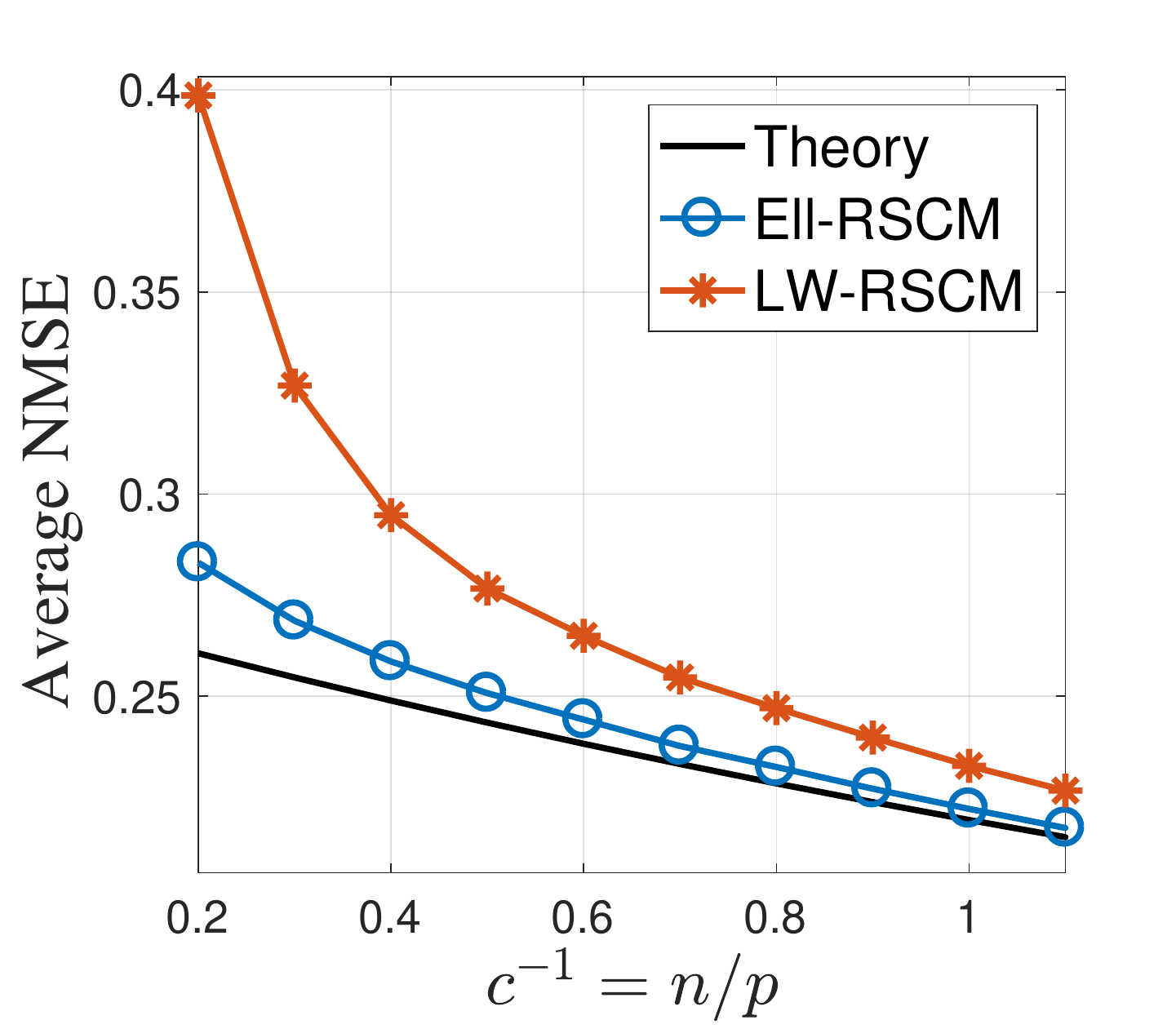}}
\caption{AR(1) process: Comparison of covariance estimators when $\pdim = 100$ and  $\varrho \in \{ 0.1, 0.4\}$ and the samples are from Gaussian distribution (upper panel) 
and $t_\nu$-distribution with $\nu=8$ degrees of freedom (lower panel).}   \label{fig:AR1_p100}
\end{figure}

\subsection{Largely varying spectrum} 

Our next study follows the set-up in \cite{zhang2016automatic} in which $\M$ has one
(or a few) large eigenvalues.
In the first set-up,  $\M$ is a diagonal matrix of size 
$50 \times 50$, where $m$ eigenvalues are equal to $1$ and the remaining $50-m$ eigenvalues are 0.01.
For the case $\ndim = \pdim=50$, Figure~\ref{fig:tengsim1} depicts the NMSE as a function of $m$  
when sampling from a $t_\nu$ distribution with $\nu=8$ degrees of freedom. 
Ell-RSCM has excellent performance as its NMSE curve is essentially overlapping with  the theoretical NMSE curve. 
LW-RSCM estimator is performing poorly for all values of $m$ except at the extremes, i.e, when $m$ is either small or  large, in which case the covariance matrix $\M$  is close to an (scaled) identity matrix. 
  

\begin{figure}
\centerline{\includegraphics[width=0.45\textwidth]{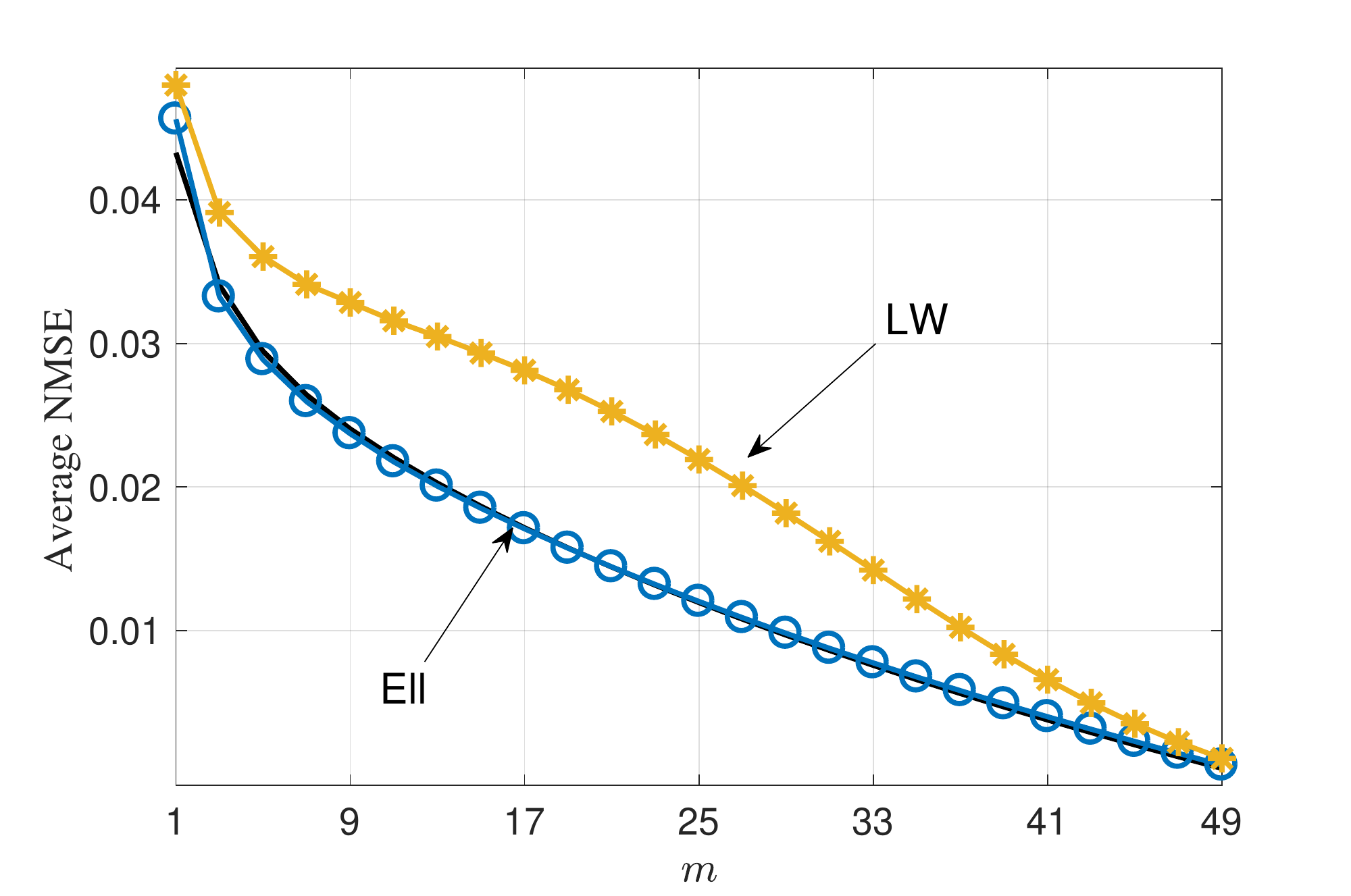}}
\vspace{-0.2cm}
\caption{The covariance matrix $\M$ has $m$ eigenvalues equal to $1$ and $50-m$ eigenvalues equal to $0.01$. The samples are from $t_\nu$-distribution with $\nu=8$ degrees of freedom and $\ndim = \pdim = 50$. }   \label{fig:tengsim1}
\end{figure}

Next simulation set-up considers a very challenging scenario in which the spectrum of $\M$ 
consists of several different eigenvalues. 
We consider the case that $\pdim=100$ and the covariance matrix  $\M$ has 30 eigenvalues equal to $100$, $40$ eigenvalues equal to $1$
and 30 eigenvalues of $0.01$. Samples are drawn from $t_\nu$ distribution with $\nu=8$ degrees of freedom. 
The NMSE curves shown in Figure~\ref{fig:tengsim1} illustrate the huge advantage of the proposed Ell-RSCM  over the LW-RSCM estimator. In fact, in this scenario the LW estimator fails and it assigns  $\hat \beta_o= 0$ for all values of $\ndim$. 
 Again the Ell-RSCM estimator  reaches  near oracle performance and thus there is not much space for improvements.
  
\begin{figure}
\centerline{\includegraphics[width=0.45\textwidth]{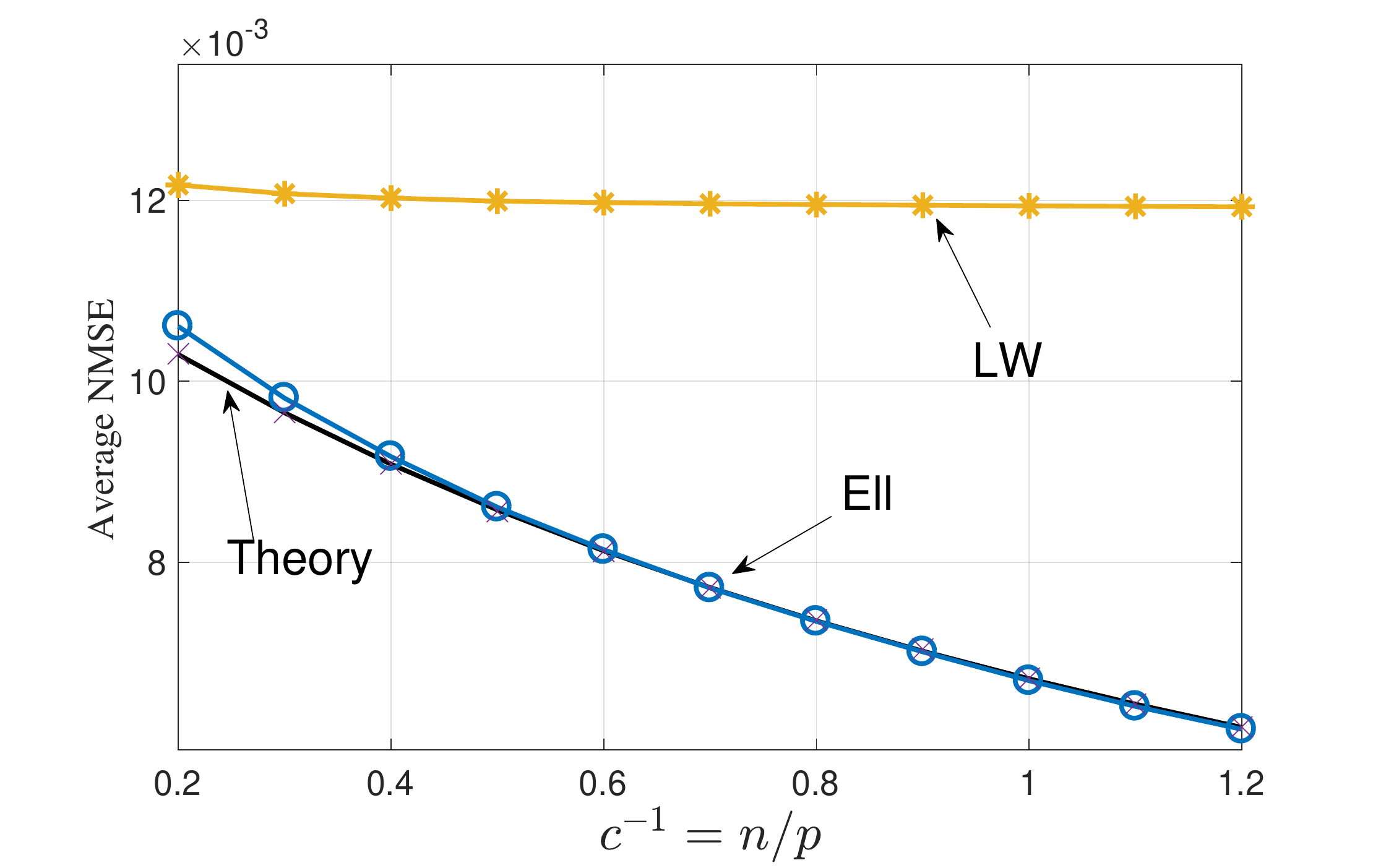}}
\vspace{-0.2cm}
\caption{ The covariance matrix  $\M$ has 30 eigenvalues equal to $100$, $40$ eigenvalues equal to $1$
and 30 eigenvalues equal to $0.01$.  The samples are from $t_\nu$-distribution with $\nu=8$ degrees of freedom and  $\pdim=100$.}   \label{fig:tengsim2}
\end{figure}

\section{Conclusion}

We proposed an optimal regularized sample covariance matrix estimator, called Ell-RSCM estimator,   
which is suitable for high-dimensional problems and when sampling from an unspecified elliptically symmetric  distribution. The estimator is based on consistent estimators (under RMT regime) of the optimal shrinkage parameters that minimize the MSE. It smartly exploits  elliptical theory such  as the knowledge of the form of  MSE of the SCM when sampling from an elliptical population.  Our simulation studies illustrated the advantage of the proposed Ell-RSCM over the Ledoit-Wolf (LW-)RSCM estimator.  The performance differences were often significant.






%



\end{document}